\newtheorem{theorem}{Theorem}
\newcommand{\eps}{\ensuremath{\varepsilon}}
\newcommand{\Gd}{\ensuremath{\delta}}
\newcommand{\Gg}{\ensuremath{\gamma}}
\newcommand{\Gl}{\ensuremath{\lambda}}
\newcommand{\Gp}{\ensuremath{\psi}}
\newcommand{\Gt}{\ensuremath{\tau}}
\newcommand{\GD}{\ensuremath{\Delta}}
\newcommand{\GG}{\ensuremath{\Gamma}}
\newcommand{\GP}{\ensuremath{\Pi}}
\newcommand{\GS}{\ensuremath{\Sigma}}
\newcommand{\GX}{\ensuremath{\Xi}}
\newcommand{\MA}{\ensuremath{\mathcal{A}}}
\newcommand{\nab}{\ensuremath{\nabla}}
\newcommand{\lolli}{\multimap}
\newcommand{\eeq}{\equiv}
\newcommand{\stlc}{\ensuremath{{\lambda^{\rightarrow}}}}
\newcommand{\rewto}{\leadsto}
\newcommand{\deff}{\mathrel{::=}}
\newcommand{\trtype}[3]{\{{#1}\}\,{#2}\,\{{#3}\}}
\newcommand{\type}{\mathsf{type}}
\newcommand{\kind}{\mathsf{kind}}
\newcommand{\ctx}{\mathsf{ctx}}
\newcommand{\jud}[4][]{{#2}\vdash_{#1}{#3}:{#4}}
\newcommand{\jchk}[4][]{{#2}\vdash_{#1}{#3}\Leftarrow{#4}}
\newcommand{\sjud}[3][]{{#2}\vdash_{#1}{#3}}
\newcommand{\jkind}[3][]{{#2}\vdash_{#1}{#3}:\kind}
\newcommand{\jtype}[4][]{{#2}\vdash_{#1}{#3}:{#4}}
\newcommand{\jtypeclf}[3]{{#1}\vdash_{\mathrm{CLF}}{#2}:{#3}}
\newcommand{\jterm}[4][]{{#2}\vdash_{#1}{#3}:{#4}}
\newcommand{\jsp}[5][]{{#2}\vdash_{#1}{#3}:{#4}>{#5}}
\newcommand{\jspclf}[4]{{#1}\vdash_{\mathrm{CLF}}{#2}:{#3}>{#4}}
\newcommand{\jctx}[3][]{{#2}\vdash_{#1}{#3}~\ctx}
\newcommand{\jsigclf}[2]{{#1}\vdash_{\mathrm{CLF}}{#2}}
\newcommand{\clfpi}{\widehat{\Pi}}
\newcommand{\clflam}{\widehat{\lambda}}
\newcommand{\clfdec}[2]{{#1}{\hat{:}}{#2}}
\newcommand{\clfdecx}[1]{\clfdec{x}{#1}}
\newcommand\epat[2]{\{{#1}\}{\shortleftarrow}{#2}}
\newcommand{\trempty}{\diamond}
\newcommand\cjoin{\bowtie}
\newcommand{\dec}[2]{{#1}{:}{#2}}
\newcommand{\decx}[1]{\dec{x}{#1}}
\newcommand{\dom}[1]{\textrm{dom}({#1})}
\newcommand{\FV}[1]{\textrm{FV}({#1})}
\newcommand{\clf}[1]{\langle{#1}\rangle}
\newcommand{\val}{\mathsf{value}}
\newcommand{\vallam}{\mathsf{value/lam}}
\newcommand{\Sapp}{\cdot}
\newcommand{\Semp}{{\cdot}}
\newcommand{\lin}{\ensuremath{{\downarrow}}}
\newcommand{\texp}{\mathsf{exp}}
\newcommand{\arr}{\mathsf{arr}}
\newcommand{\tof}{\mathsf{of}}
\newcommand{\toflam}{\mathsf{of/lam}}
\newcommand{\tofapp}{\mathsf{of/app}}
\newcommand{\lam}{\mathsf{lam}}
\newcommand{\app}{\mathsf{app}}
\newcommand{\eval}{\mathsf{eval}}
\newcommand{\ret}{\mathsf{ret}}
\newcommand{\fapp}{\mathsf{fapp}}
\newcommand{\gen}{\mathsf{gen}}
\newcommand{\step}{\mathsf{step}}
\newcommand{\dest}{\mathsf{dest}}
\newcommand{\tp}{\mathsf{tp}}
\newcommand{\stepeval}{\mathsf{step/eval}}
\newcommand{\stepapp}{\mathsf{step/app}}
\newcommand{\stepbeta}{\mathsf{step/beta}}
\newcommand{\geneval}{\mathsf{gen/eval}}
\newcommand{\genret}{\mathsf{gen/ret}}
\newcommand{\genfapp}{\mathsf{gen/fapp}}
\newcommand{\gendest}{\mathsf{gen/dest}}
\newcommand{\result}{\mathsf{result}}
\newcommand{\resfinal}{\mathsf{res/final}}
\newcommand{\resstep}{\mathsf{res/step}}
\newcommand\ein[1]{{\bullet}{#1}}
\newcommand\eout[1]{{#1}{\bullet}}
\newcommand{\narrowbox}[2]
  {\ensuremath{\fboxsep=0pt\fboxrule=0.07ex\fbox{\rule{0em}{#1}\hspace{#2}}}}
\newcommand{\nbox}{\mathopen{\narrowbox{1.75ex}{0.35em}}\hspace*{0.1em}}
\newcommand{\Mod}{\mathopen{\nbox}} 
\newcommand\cmod[1]{{\Mod}{#1}}
\newcommand\cmodname{{\Mod}}
\title{Towards Meta-Reasoning in the Concurrent Logical
  Framework CLF\thanks{This work was supported by the Qatar National Research Fund under grant
\mbox{NPRP 09-1107-1-168}.}}
\author{Iliano Cervesato \qquad\qquad Jorge Luis Sacchini
\institute{Carnegie Mellon University}
\email{~~iliano@cmu.edu\qquad\qquad~~ sacchini@qatar.cmu.edu}
}
\begin{document}
\maketitle
\begin{abstract}
  The concurrent logical framework CLF is an extension of the logical
  framework LF designed to specify concurrent and distributed languages.
  While it can be used to define a variety of formalisms, reasoning about such
  languages within CLF has proved elusive.  In this paper, we propose an
  extension of LF that allows us to express properties of CLF specifications.
  We illustrate the approach with a proof of safety for a small language with
  a parallel semantics.
\end{abstract}

\section{Introduction}\label{sec:introduction}

Due to the widespread availability of multi-core architectures and the growing
demands of web applications and cloud-based computation models, primitives for
programming concurrent and distributed systems are becoming essential features
in modern programming languages.  However, their semantics and meta-theory are
not as well understood as those of sequential programming languages.  This
limits our assurance in the correctness of the systems written in them.  Thus,
just as in the case of sequential languages 40 years ago, there has been
increasing interest in defining formal semantics that isolate and explain
their quintessential features.  Just as for sequential languages, such
semantics hold the promises of developing, for example, provably-correct
compilers and optimizations for such languages, as well as verification
frameworks for concurrent applications written using them.

Logical frameworks are formalisms designed to specify and reason about the
meta-theory of programming languages and logics. They are at the basis of tools
such as Agda~\cite{Norell07}, Coq~\cite{Coq10}, Isabelle~\cite{NipkowPW02}, and
Twelf~\cite{Twelf99}. The current generation of logical frameworks were designed
to study sequential programming languages, and specifying concurrent  systems
using these tools requires a large effort,
as the user is forced to define ad-hoc concurrency
models that are difficult to reuse and automate.

One way to deal with this problem is to design a logical framework that
natively embeds a general-purpose concurrency model.  This then provides
native support for describing parallel execution and synchronization, for
example, thus freeing the user from the delicate task of correctly encoding
them and proving properties about them.  One example of this approach is the
concurrent logical framework CLF~\cite{CervesatoPWW03, CervesatoWPW03,
  Schack-Nielsen11}, an extension of the logical framework LF~\cite{LF93}
designed for specifying concurrent, parallel, and distributed languages.  One
of its distinguishing features is its support for expressing concurrent
traces, i.e., computations where independent steps can be permuted.  For
example, traces can represent sequences of evaluation steps in a parallel
operational semantics, where executions that differ only in the order of
independent steps are represented by the same object (modulo permutation).
CLF has been used to encode a variety of systems such as Concurrent ML,
the $\pi$-calculus, and Petri nets in a natural way~\cite{CervesatoPWW03}.

However, unlike LF which permits specifying a system and its meta-theory
within the same framework, CLF is not expressive enough for proving
meta-theoretical properties about CLF specifications (e.g., type preservation,
or the correctness of program transformations).  The main reason is that
traces are not first-class values in CLF, and therefore cannot be manipulated.
In this work we propose a logical framework that supports meta-reasoning over
parallel, concurrent, and distributed specifications.  Specifically, the main
contributions of this paper are the following:
\begin{itemize}
\item%
  We define an extension of LF, called Meta-CLF, that allows meta-reasoning
  over a CLF specification.  It enriches LF with a type for concurrent traces
  and the corresponding constructor and destructors (via pattern-matching).
  This permits a direct manipulation of traces.  Meta-theorems can be
  naturally represented as relations, similar to the way sequential
  programming languages are analyzed in LF.
\item We illustrate the use of Meta-CLF by proving safety for a CLF
  specification of a small programming language with a parallel semantics.
\end{itemize}
The rest of the paper is organized as follows: in Sect.~\ref{sec:clf} we
recall CLF and use it to define the operational semantics of a simple parallel
language.  In Sect.~\ref{sec:meta-clf} we present Meta-CLF and use it to
express a proof of safety for this language.  We discuss related work in
Sect.~\ref{sec:related-work} and outline directions of future research in
Sect.~\ref{sec:conclusions}.

\section{CLF}
\label{sec:clf}

We begin by defining some key elements of CLF\@.  For conciseness, we omit
aspects of CLF that are not used in our examples.  The results of this paper
extend to the full language, however.  The presentation given here follows the
template proposed in~\cite{CervesatoPSSS12b} rather than the original
definition of CLF~\cite{CervesatoPWW03,CervesatoWPW03}; see
also~\cite{Schack-Nielsen11}.

\subsection{Syntax and Typing Rules}
\label{sec:syntax-typing-rules}

CLF is an extension of LF, or more precisely of the linear logical framework
LLF~\cite{cervesato02ic}, with a lax modality from lax
logic~\cite{FairtloughM97} used to encapsulate the effects of concurrent
computations.  The introduction form of lax modality are witnessed by a form
of proof term called {\em traces}.  A trace is a sequence of computational
steps where independent steps can be permuted.

The syntax of CLF is given by the following grammar:
\begin{align*}
  K &\deff \type \mid \GP\dec{{!}x}{T}.K & \text{(Kinds)} \\
  P &\deff a\Sapp{!}S \mid \{ \GD\} & \text{(Base types)} \\
  T &\deff \GP\dec{\cmod{x}}{T}.T \mid P & \text{(Types)} \\\ 
  N &\deff \Gl\dec{\cmod{x}}{T}.N \mid H\Sapp S \mid \{\eps\}  &
  \text{(Terms)} \\
  \eps &\deff \trempty \mid \eps_1;\eps_2 \mid \epat{\GD}{c\Sapp S} & \text{(Traces)} \\
  S &\deff \Semp \mid \cmod{N}; S & \text{(Spines)} \\
  \GD &\deff \Semp \mid \GD,\cmod{x}:T & \text{(Contexts)} \\
  \cmodname &\deff {\lin} \mid {!} & \text{(Modalities)}
\end{align*}
In this paper, we only consider CLF's persistent (!) and linear (\lin)
substructural modalities.  CLF also includes an affine modality, omitted here
for space reasons.

Kinds are as in LF\@.  Note that the argument in product kinds must be
persistent.  Base types are either {\em atomic} ($a\Sapp !S$) which are formed
by a constant applied to a persistent spine~\cite{CervesatoP03}, or {\em
  monadic} which are a context enclosed in the lax modality, denoted with
$\{\_\}$.

A type is either a product or a base type.  We consider two different
products: a persistent product, $\GP\dec{{!}x}{T}{U}$, as in LF, and a linear
product, $\GP\dec{{\lin}x}{T}{U}$, from LLF\@.  Note, however, that the typing
rules prevent dependencies on linear products.  We usually write $T\to U$ and
$T\lolli U$ for $\GP\dec{{!}x}{T}{U}$ and $\GP\dec{{\lin}x}{T}{U}$,
respectively, when $x$ is not free in $U$.

A term is either an abstraction (persistent and linear), an atomic term $H\Sapp
S$ formed by a variable $H$ applied to a list of arguments given by the spine
$S$, or a trace $\{\eps\}$.

A trace is either the empty trace ($\trempty$), a composition of traces
($\eps_1;\eps_2$) or an individual step of the form $\epat{\GD}{c\Sapp S}$,
where $c$ is a constant defined in the signature applied to a spine $S$, whose
type must be monadic.  This step {\em consumes} the linear variables in $S$
and {\em produces} the linear and persistent variables in $\GD$.  A step binds
the variables defined in $\Delta$ in any trace that follows it.

Concurrent computation is expressed by endowing traces with a monoidal
structure: the empty trace is the unit, and trace composition is associative.
Furthermore, it allows permutation of independent steps.  Step independence is
defined on the basis of the notion of {\em trace interface}.  The {\em input
  interface} of a trace, denoted $\ein{\eps}$, is the set of variables used by
$\eps$, i.e., its free variables.  The {\em output interface} of a trace,
denoted $\eout{\eps}$, is the set of variables defined by $\eps$.  They are
given by the following equalities:
\begin{align*}
     \ein{(\trempty)} &\;=\; \emptyset
     & \eout{(\trempty)} &\;=\; \emptyset
  \\ \ein{(\epat{\GD}{c\Sapp S})} &\;=\; \FV{S}
  & \eout{(\epat{\GD}{c\Sapp S})} &\;=\; \dom{\GD}
  \\ \ein{(\eps_1;\eps_2)} &\;=\; \ein{\eps_1}\cup(\ein{\eps_2}\setminus\eout{\eps_1})
   & \eout{(\eps_1;\eps_2)} &\;=\;
  \eout{\eps_2}\cup(\eout{\eps_1}\setminus\ein{\eps_2})\cup{!}{(\eout{\eps_1})}
\end{align*}
where $\FV{S}$ is the set of free variables in $S$, and $\dom{\GD}$ is the set
of variables declared in $\GD$.  In a trace composition, the output interface
contains all the persistent variables introduced in $\eps_1$, even if $\eps_2$
uses them.  In other words, persistent facts cannot be removed from the output
once they are introduced.  On the other hand, linear facts are effectively
removed if $\eps_2$ uses them.

Two traces $\eps_1$ and $\eps_2$ are {\em independent}, denoted
$\eps_1\parallel \eps_2$, if $\ein{\eps_1}\cap\eout{\eps_2}=\emptyset$ and
$\eout{\eps_1}\cap\ein{\eps_2}=\emptyset$.  Independent traces do not share
variables and can therefore be executed in any order.

\paragraph{Equality.}
We denote with $\eeq$ the equality relation on kinds, types, terms, spines,
and contexts.  It is defined as $\alpha$-equality extended with trace
equality, also denoted with $\eeq$, defined by the following
rules:
$$
\infer{ }{\eps;\trempty\eeq\eps} \qquad
\infer{ }{\eps\eeq\eps;\trempty}
\qquad
\infer{ }{\eps_1;(\eps_2;\eps_3)\eeq(\eps_1;\eps_2);\eps_3}
$$
$$
\infer{\eps_1\parallel\eps_2}{\eps_1;\eps_2\eeq\eps_2;\eps_1}
\qquad
\infer{\eps_1\eeq\eps_1'}{\eps_1;\eps_2\eeq\eps_1';\eps_2}
\qquad
\infer{\eps_2\eeq\eps_2'}{\eps_1;\eps_2\eeq\eps_1;\eps_2'}
$$
These rules state that traces form a monoid and that independent steps can be
permuted.

\paragraph{Typing.}
The typing rules of CLF rely on some auxiliary meta-level operators.  We say
that a context $\GD$ {\em splits} into $\GD_1$ and $\GD_2$, denoted
$\GD=\GD_1\cjoin\GD_2$, if each persistent declaration in $\GD$ appears in
both $\GD_1$ and $\GD_2$, and each linear declaration in $\GD$ appears in
exactly one of $\GD_1$ and $\GD_2$.


We write $X[N/x]$ for the {\em hereditary substitution} of variable $x$ by
term $N$ in $X$ (where $X$ belongs to one of the CLF syntactic classes).
Hereditary substitutions~\cite{watkinscpw03} normalizes a terms as the
substitution is carried out, thereby allowing us to restrict the definition of
CLF to canonical terms (in $\beta\eta$-normal form).  Its definition is
type-directed and therefore terminating.  The interested reader can find an
exhaustive account in~\cite{watkinscpw03}.

The typing rules of CLF are displayed in Fig.~\ref{fig:typing-clf}.  We assume
a fixed signature $\GS$ of global declarations for kinds and types.  Typical
of type theories, we use bidirectional typing rules where types of variables
are inferred from the context, while terms are checked against a type.

The rules for kinds, types, spines and terms are standard~\cite{watkinscpw03}.
Note that only persistent variables are dependent in types.  The typing rules
for traces show the intuition that a trace is a context transformer: we can
read the judgment $\jud{\GD}{\eps}{\GD'}$ as ``\emph{$\eps$ transform the context
$\GD$ into $\GD'\,$}''.  Note that the trace typing rules imply a form of the
frame rule: in fact, it is easy to prove that if $\jud{\GD_1}{\eps}{\GD_2}$,
then $\jud{\GD_0\cjoin\GD_1}{\eps}{\GD_0\cjoin\GD_2}$.  The empty trace does
not change the context, while traces can be composed if the internal interface
matches.  For a single step, part of the context is transformed: the spine $S$
consumes $\GD_1$ and generates the context $\GD'$ (or equivalently, $\GD_2$).


\begin{figure}
\noindent Contexts: \fbox{$\sjud{{!}\GD}{\GD'}$}
$$
\infer
  { }
  {\sjud{{!}{\GD}}{\Semp}}
\qquad
\infer
  {\jud{{!}{\GD}}{T}{\type} \\
    \sjud{{!}{(\GD,\dec{\cmod{x}}{T})}}{\GD'} }
  {\sjud{{!}{\GD}}{\dec{\cmod{x}}{T},\GD'}}
$$
Kinds: \fbox{$\jud{{!}\GD}{K}{\kind}$}
$$
\infer
  { }
  { \jud{{!}\GD}{\type}{\kind} }
\qquad
\infer
  { \jud{{!}\GD}{T}{\type} \\ \jud{{!}\GD,\dec{{!}x}{T}}{K}{\kind} }
  { \jud{{!}\GD}{\GP\dec{{!}x}{T}.K}{\kind} }
$$
Base types: \fbox{$\jud{{!}\GD}{P}{K}$}
$$
\infer
  { a:\GP{!}\GD'.\type \\ \jud{{!}\GD}{S}{{!}\GD'} }
  { \jud{{!}\GD}{a\Sapp S}{\type} }
\qquad
\infer
  { \sjud{{!}\GD}{\GD'} }
  { \sjud{{!}\GD}{\{\GD'\}} }
$$
Types: \fbox{$\jud{{!}\GD}{T}{K}$}
$$
\infer
  { \jud{{!}\GD}{T}{\type} \\ \jud{{!}\GD,\dec{{!}x}{T}}{T}{\type} }
  { \jud{{!}\GD}{\GP\dec{{!}x}{T}.T}{\type} }
\qquad
\infer
  { \jud{{!}\GD}{T}{\type} \\ \jud{{!}\GD}{U}{\type} \\ \text{$x$ fresh}}
  { \jud{{!}\GD}{T\lolli U}{\type} }
$$
Terms: \fbox{$\jchk{\GD}{N}{T}$}
$$
\infer{ \jchk{\GD,\dec{\cmod{x}}{T}}{N}{U} }
      { \jchk{\GD}{\Gl\dec{\cmod{x}}.N}{\GP\dec{\cmod{x}}{T}.U} }
\qquad
\infer{ \jud{\GD}{\eps}{\GD'} }
      { \jchk{\GD}{\{\eps\}}{\{\GD'\}} }
$$
$$
\infer{ \dec{{!}{x}}{T} \in  \GD \\
        \jsp{\GD}{S}{T}{U} \\ U \eeq U'}
      { \jchk{\GD}{x\Sapp S}{U'} }
\qquad
\infer{ \jsp{\GD_0,\GD_1}{S}{T}{U} \\ U \eeq U'}
      { \jchk{\GD_0,\dec{\lin x}{T},\GD_1}{x\Sapp S}{U'} }
$$
$$
\infer{\dec{c}{T} \in  \GS \\
       \jsp{\GD}{S}{T}{U} \\ U \eeq U' }
      {\jchk{\GD}{c\Sapp S}{U'}}
$$
Traces: \fbox{$\jud{\GD}{\eps}{\GD'}$}
$$
\infer{ }{\jud{\GD}{\trempty}{\GD}}
\qquad
\infer
  {\jud{\GD}{\eps_1}{\GD_1} \\
    \jud{\GD_1}{\eps_2}{\GD_2}}
  {\jud{\GD}{\eps_1;\eps_2}{\GD_2}}
$$
$$
\infer
  {\dec{c}{T} \in  \GS
    \\ \jsp{\GD_1}{S}{T}{\{\GD'\}}
    \\ \GD_2 \eeq \GD' }
  { \jud{\GD_0\cjoin\GD_1}
        {\epat{\GD_2}{c\Sapp S}}
        {\GD_0,\GD_2} }
$$
Spines: \fbox{$\jsp{\GD}{S}{T}{T'}$}
$$
\infer
  { }
  { \jsp{\GD}{\Semp}{T}{T} }
\qquad
\infer
  { \jchk{{!}\GD_1}{N}{T} \\ \jsp{\GD_0}{S}{T_2[N/x]}{T'} }
  { \jsp{\GD_0\cjoin{!}\GD_1}{({!}N;S)}{\GP\dec{{!}x}{T}{U}}{T'} }
\qquad
\infer
  { \jchk{{\lin}\GD_1}{N}{T} \\ \jsp{\GD_0}{S}{U}{T'} }
  { \jsp{\GD_0\cjoin{\lin}\GD_1}{({\lin}N;S)}{T\lolli U}{T'} }
$$
\caption{CLF typing rules}
\label{fig:typing-clf}
\end{figure}

\subsection{Substructural Operational Semantics in CLF}
\label{sec:substr-oper-semant}

In this section, we show how to use CLF to define a substructural operational
semantics (SSOS) for a programming language.  As a case study, we illustrate
the approach on the simply-typed lambda calculus with an operational semantics
that evaluates functions and their arguments in parallel.

SSOS specifications have two main features.  The first is compositionality:
extending a programming language with a new feature does not invalidate the
SSOS already developed~\cite{Simmons12}.  Second, SSOS specifications can
naturally express parallel and concurrent semantics of a programming
language~\cite{Pfenning04}.

The language of expressions and types for the simply-typed lambda calculus,
\stlc, is given by
the following grammar:
\begin{align*}
  e &\deff x \mid \Gl x.e \mid e\,e & \text{(Expressions)} \\
  \Gt &\deff o \mid \Gt \to \Gt & \text{(Types)}
\end{align*}
Expressions are either variables, abstractions or applications; types are
either base types or function types.  Typing is defined by the judgment
$\GG\vdash e:\Gt$, given by the following rules:
$$
\infer
  { x:\Gt \in  \GG }
  { \GG\vdash x:\Gt }
\qquad
\infer
  { \GG,x:\Gt_1\vdash e:\Gt_2 }
  { \GG\vdash \Gl x.e: \Gt_1\to\Gt_2 }
\qquad
\infer
  { \GG\vdash e_1:\Gt_2\to\Gt_1 \\ \GG\vdash e_2:\Gt_2 }
  { \GG\vdash e_1\,e_2: \Gt_1 }
$$
Evaluation is given by $\beta$-reduction: $(\Gl x.e_1)\,e_2 \rewto e_1[e_2/x]$.

In CLF (and LF as well), we can represent this language and its typing rules
using a higher-order abstract syntax encoding as follows.
For clarity, we use implicit arguments (which can be reconstructed):

\begin{align*}
  \texp &: \type &
  \tof &: \texp\to\tp\to\type \\
  \lam &: (\texp\to\texp) \to\texp &
  \qquad\tofapp &: \tof~(\app~e_1~e_2)~t_1
  \\
  \app &: \texp\to\texp\to\texp &
  & ~~\gets \tof~e_1~(\arr~t_2~t_1) \\
  \tp &: \type &
  & ~~\gets \tof~e_2~t_2 \\
  \arr &: \tp \to\tp\to\tp &
  \toflam &: \tof~(\lam~e_2)~(\arr~t_1~t_2) \\
  \val &: \texp\to\type &
  & ~~\gets (\Pi x:\texp.~\tof~x~t_1\to\tof~(e_2~x)~t_2) \\
  \vallam &: \val~(\lam~\Gl x.e~x)
\end{align*}

We denote this signature with $\GS_\stlc$.
The syntax of \stlc{} is defined by the type $\texp$ with constructors $\app$
(representing function application) and $\lam$ (representing abstraction).
The type $\tp$ encodes types from \stlc\ with $\arr$ being the function type
from \stlc.
Variables are implicitly defined using CLF variables.
Similarly, there is no explicit representation of the context for typing;
instead, we use CLF's (persistent) context for this purpose.
The typing relation is expressed by the CLF type family $\tof$ relating \stlc\
expressions and types.
We also define the predicate $\val$ stating that abstractions are
values.

In the following, we define a SSOS for \stlc\ evaluation using
destination-passing style~\cite{Pfenning04}.
The SSOS is given by a state and a set of rewriting rules.
The state is a multiset whose elements have one of the following forms:
\begin{itemize}
\item%
  $\eval~e~d$: evaluate expression $e$ in destination $d$.  Destinations are
  virtual locations that store expressions to be evaluated and results.
\item%
  $\ret~e~d$: the result $e$ of an evaluation is stored at $d$. An invariant
  of the semantics ensures that $e$ is always a value.
\item%
  $\fapp~d_1~d_2~d$: an application frame expecting the result of evaluating a
  function in $d_1$, its argument in $d_2$, and storing the result in $d$.
\end{itemize}
The rewriting rules encoding expression evaluation are as follows:
\begin{align*}
   \eval~e~d   &~~\rewto~~ \ret~e~d
 & \text{if $\mathsf{value}(e)$}
\\ \eval~(e_1\,e_2)~d  &~~\rewto~~\eval~e_1~d_1,\;\;\eval~e_2~d_2,\;\;\fapp~d_1~d_2~d
 & \text{$d_1,d_2$ fresh}
\\ \ret~(\Gl x.e_1)~d_1,\;\;\ret~e_2~d_2,\;\;\fapp~d_1~d_2~d &~~\rewto~~ \eval~(e_1[e_2/x])~d
\end{align*}
The first rule says that evaluating a value expression immediately returns the
result.  The second rule says that to evaluate an application, we evaluate the
function and argument in two fresh destinations, and create a frame that
connects the results.  Evaluation of function and argument can proceed in
parallel.  The third rule computes the application once we have the values of
the function and argument connected by a frame.

A complete evaluation of an expression $e$ to a value $v$ is given by a
sequence of multisets of the form:
$$
\MA_0=\{ \eval~e~d \} \rewto \MA_1 \rewto … \rewto \MA_{n-1} \rewto \MA_n=\{\ret~v~d\}
$$
where at each step $\MA_i\rewto\MA_{i+1}$ part of $\MA_i$ is rewritten using one of the
evaluation rules given above.

This semantics can be faithfully represented in CLF using the linear context to
represent the multiset state, where each element is a linear fact, and
destinations are represented by persistent facts.
The semantics is given by the following CLF signature:
\begin{align*}
  \dest &: \type \\
  \eval &: \texp\to\dest\to\type \\
  \ret &: \texp\to\dest\to\type \\
  \fapp &: \dest\to\dest\to\dest\to\type \\
  \stepeval &: \eval~e~d\lolli\val~e\to\{ \dec{\lin x}{\ret~e~d} \} \\
  \stepapp &: \eval~(\app~e_1~e_2)~d \lolli \{ \dec{{!}d_1}{\dest},
  \dec{{!}d_2}{\dest}, \dec{\lin x_1}{\eval~e_1~d_1},
  \dec{\lin x_2}{\eval~e_2~d_2}, \dec{\lin x_3}{\fapp~d_1~d_2~d} \} \\
  \stepbeta &: \ret~(\lam~e_1)~d_1 \lolli~\ret~e_2~d_2\lolli\fapp~d_1~d_2~d\lolli
  \{ \dec{\lin x}{\eval~(e_1~e_2)~d} \}
\end{align*}
We denote with $\GS_\step$ the signature containing the declarations of the
evaluation rules.
Rule $\stepeval$ is effectively a conditional rewriting rule.
In rule $\stepapp$, new destinations ($d_1$ and $d_2$) are created to evaluate
function and argument; these evaluations can proceed in parallel.

\paragraph{Safety.}
Safety for this language is proven by giving a suitable notion of what a valid
state looks like~\cite{Simmons12}.  Note that not all multisets are valid
state.  For example, the singleton $\{ \fapp~d_1~d_2~d \}$ is not valid, since
there is no expression to evaluate at $d_1$ or $d_2$; similarly $\{
\eval~e_1~d,\eval~e_2~d \}$ is not valid since there are two expressions
evaluating on the same destination.

In a valid state, the elements should form a tree whose nodes are linked by
destinations.  Internal nodes have the form $\fapp~d_1~d_2~d$, with two
children (corresponding to $d_1$ and $d_2$) and the leaves are of the form
$\eval~e~d$ or $\ret~v~d$ (where $v$ is a value).  Furthermore, the types of
the expressions should match.

Following Simmons~\cite{Simmons12}, we define well-typed states by rewriting
rules.  The idea is to create the tree {\em top-down} starting at the root.
We can write these rules in CLF as follows:
\begin{align*}
  \gen&:\tp\to\dest\to\type \\
  \geneval &: \gen~t~d\lolli~\tof~e~t\to\{\dec{\lin x}{\eval~e~d}\} \\
  \genret &: \gen~t~d\lolli~\tof~e~t\to\val~e~\to\{\dec{\lin x}{\ret~e~d}\} \\
  \genfapp &: \gen~t~d \lolli \{ \dec{{!}d_1}{\dest}, \dec{{!}d_1}{\dest},
  \dec{\lin x_0}{\fapp~d_1~d_2~d}, \dec{\lin x_1}{\gen~(\arr~t_1~t)~d_1},
  \dec{\lin x_2}{\gen~t_1~d_2} \} \\
  \gendest &: \{ \dec{{!}d}{\dest} \}
\end{align*}
We denote with $\GS_\gen$ the signature containing these generation rules.
A fact of the form $\gen~t~d$ is read as {\em ``generate a tree with root at
  destination $d$ and type $t$''}.
We have three ways to do this: by generating a leaf of either the form
$\eval~e~d$ or $\ret~e~d$ (for an $e$ of the appropriate type), or by generating
an internal node $\fapp~d_1~d_2~d$ and then generating trees rooted at $d_1$ and
$d_2$.
Rule $\gendest$ is necessary to keep track of destinations that were created
during evaluation (by rule $\stepapp$) but are not used anymore (after the
application is reduced using $\stepbeta$, the destinations created to evaluate
the function and the argument are not needed anymore; see Sect.~\ref{sec:safety-ssos-spec}).

A generic tree is built by a sequence of rewriting steps starting from a
single $\gen~t~d$: $\MA_0=\{ \gen~t~d \} \rewto \MA_1 \rewto … \rewto \MA_n$,
where $\MA_n$ does not contain facts of the form $\gen~t~d$.

This kind of {\em generative rules} for describing valid states generalizes
context-free grammars, with $\gen$ being a non-terminal, and $\eval$, $\ret$,
and $\fapp$ are terminal symbols~\cite{Simmons12}.
Generative rules (also called generative grammars) are very powerful allowing to
express a wide variety of invariants.

With this definition of a well-typed state, we can prove that the language is
safe.  Safety is given by two properties: type preservation
(i.e., evaluation preserves well-typed states), and progress (i.e., either the
state contains the final result, or it is possible to make a step), as
stated in the following theorem.
We write $\MA\rewto^*_\GS\MA'$ to mean a {\em maximal} rewrite sequence from
$\MA$ to $\MA'$ using the rules in $\GS$. The sequence is maximal in the sense
that $\MA'$ does not contain non-terminal symbols.
We write $\MA\rewto^1_\GS\MA'$ to mean a {\em step} from $\MA$ to $\MA'$ using
one of the rules in $\GS$.

\begin{theorem}\label{thm:safety-stlc}
  The language defined by the signatures $\GS_\stlc$, $\GS_\step$ and
  $\GS_\gen$ is safe, i.e., it satisfies the following properties:
\begin{description}
\item[Preservation] If $\{ \gen~t~d \} \rewto^*_{\GS_\gen} \GD$ and $\GD
  \rewto^1_{\GS_\step} \GD'$, then $\{ \gen~t~d \} \rewto^*_{\GS_\gen} \GD'$.
\item[Progress] If $\{ \gen~t~d \} \rewto^*_{\GS_\gen} \GD$, then either $\GD$ is
  of the form $\{\ret~v~d\}$ with $\val~v$, or there exists $\GD'$ such that
  $\GD\rewto_{\GS_\step}\GD'$.
\end{description}
\end{theorem}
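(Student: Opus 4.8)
The plan is to pin down the semantic invariant computed by the generative signature $\GS_\gen$, show it is preserved by $\GS_\step$-steps, and show it entails progress. Call a linear context $\GD$ a \emph{well-formed evaluation tree of type $t$ at $d$} if, after discarding persistent $\dest$-facts, it has one of the three shapes dictated by $\GS_\gen$: (i)~$\{\ret~e~d\}$ with $\tof~e~t$ and $\val~e$; (ii)~$\{\eval~e~d\}$ with $\tof~e~t$; or (iii)~$\{\fapp~d_1~d_2~d\}\cup\GD_1\cup\GD_2$ with $d_1,d_2$ distinct and fresh, $\GD_1$ a well-formed tree of type $\arr~t_1~t$ at $d_1$, and $\GD_2$ a well-formed tree of type $t_1$ at $d_2$. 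The first lemma to establish is a \emph{characterization}: $\{\gen~t~d\}\rewto^*_{\GS_\gen}\GD$ holds iff $\GD$ is such a tree of type $t$ at $d$, extended with an arbitrary multiset of orphan persistent $\dest$-facts. The forward direction is an induction on the length of the maximal generation sequence: the first rewrite applies one of $\geneval,\genret,\genfapp$ to the unique $\gen$-fact, or $\gendest$ merely adds a destination; since the $\gen$-rules consume only $\gen$-facts and emit facts rooted at fresh destinations, the expansions of the two $\gen$-facts spawned by $\genfapp$ are independent and can be separated, giving shape (iii). The backward direction is a structural recursion on the tree, choosing $\geneval/\genret/\genfapp$ by its shape and inserting one $\gendest$ per orphan destination; the monoid and permutation laws on traces let the two child subderivations be sequenced in either order.

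\paragraph{Preservation.} Assume $\GD$ is a well-formed tree of type $t$ at $d$ and $\GD\rewto^1_{\GS_\step}\GD'$. Unfolding the trace-step typing rule, the step consumes a set of linear facts and installs the (instantiated) monadic output of the rule in their place, leaving every other fact, in particular every persistent fact, untouched. Since the $\GS_\step$-rules consume only $\eval$-, $\ret$- and $\fapp$-facts, the consumed facts are exactly those of a subtree, and a well-formed tree with a subtree replaced by a well-formed subtree of the same type is again well-formed; so it suffices to check each replacement:
\begin{itemize}
\item $\stepeval$ replaces a shape-(ii) leaf $\eval~e~d'$ (so $\tof~e~t'$), with side condition $\val~e$, by $\ret~e~d'$, which is a shape-(i) leaf of the same type.
\item $\stepapp$ replaces a shape-(ii) leaf $\eval~(\app~e_1~e_2)~d'$ (so $\tof~(\app~e_1~e_2)~t'$) by a shape-(iii) node $\fapp~d_1~d_2~d'$ with children $\eval~e_1~d_1$ and $\eval~e_2~d_2$ at fresh $d_1,d_2$; inverting $\tofapp$ delivers $\tof~e_1~(\arr~t_2~t')$ and $\tof~e_2~t_2$, exactly what shape (iii) requires.
\item $\stepbeta$ acts on a shape-(iii) node $\fapp~d_1~d_2~d'$ both of whose children are shape-(i) leaves $\ret~(\lam~e_1)~d_1$ and $\ret~e_2~d_2$ --- in a well-formed tree these three facts necessarily form such a subtree, since a destination roots at most one subtree and shapes (ii) and (iii) carry no $\ret$-fact at their root --- replacing them by the single leaf $\eval~(e_1~e_2)~d'$ and orphaning $d_1,d_2$. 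From well-formedness $\tof~(\lam~e_1)~(\arr~t_1~t')$ and $\tof~e_2~t_1$; inverting $\toflam$ yields the meta-level function $\Pi x:\texp.~\tof~x~t_1\to\tof~(e_1~x)~t'$, and instantiating it at $e_2$ via hereditary substitution gives $\tof~(e_1~e_2)~t'$, so the new leaf is shape (ii); the two freed destinations are absorbed by the orphan-$\dest$ part of the invariant.
\end{itemize}
In every case $\GD'$ is again a well-formed tree of type $t$ at $d$, so the backward direction of the characterization gives $\{\gen~t~d\}\rewto^*_{\GS_\gen}\GD'$.

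\paragraph{Progress.} By the forward direction of the characterization and structural induction on the well-formed tree $\GD$ of type $t$ at $d$ (all of whose expressions are closed, since reachable states are typed in a context with no $\texp$-variables). In shape (i) the linear part of $\GD$ is $\{\ret~v~d\}$ with $\val~v$, the final case. In shape (ii) we have $\eval~e~d$ with $\tof~e~t$ and $e$ closed, so by canonical forms $e$ is either an abstraction, in which case $\stepeval$ applies, or an application, in which case $\stepapp$ applies. In shape (iii), apply the induction hypothesis to $\GD_1$ and $\GD_2$: if either child can take a $\GS_\step$-step then so can $\GD$, by the frame property of trace typing; otherwise both children are final, $\{\ret~v_1~d_1\}$ and $\{\ret~v_2~d_2\}$ with $v_1,v_2$ values, and inverting $\val~v_1$ forces $v_1=\lam~e_1$, so $\stepbeta$ fires on $\ret~(\lam~e_1)~d_1,\ \ret~v_2~d_2,\ \fapp~d_1~d_2~d$.

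\paragraph{Main difficulty.} The crux is the characterization lemma --- designing the inductive notion of well-formed tree so that it is provably equivalent to reachability through $\GS_\gen$. This is where CLF's concurrent trace equivalence matters (maximal generation sequences interleave the expansions of distinct $\gen$-facts, so the tree structure must be recovered up to permutation), and where one has to track the persistent destinations left behind by $\stepbeta$, which is exactly what $\gendest$ is for. The remaining ingredients are routine: the inversion and canonical-forms lemmas for $\tof$, $\val$, $\app$ and $\lam$ (from adequacy of the LF encoding), the substitution step inside the $\stepbeta$ case (hereditary substitution), and the frame property, which is already available.
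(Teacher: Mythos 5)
Your proof is correct, but it is organized around a different pivot than the paper's. The paper (following Simmons) never introduces an explicit tree predicate: preservation is proved by case analysis on the $\GS_\step$ step followed by \emph{inversion on the generation trace itself} --- locating the $\geneval$ (or $\genret$, $\genfapp$) steps that produced the consumed facts, permuting them to the end using trace equality, and splicing in replacement generation steps --- and progress is proved by induction on the \emph{length of the generating trace} with case analysis on its first step, splitting the remainder of the trace into the two independent sub-traces spawned by $\genfapp$. You instead reify the invariant as an inductive ``well-formed evaluation tree'' and prove a characterization lemma ($\{\gen~t~d\}\rewto^*_{\GS_\gen}\GD$ iff $\GD$ is such a tree plus orphan $\dest$-facts), after which preservation and progress become standard invariant-preservation and structural-induction arguments. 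The content is the same --- your characterization lemma packages exactly the trace inversion, permutation, and splitting that the paper performs in-line, and both arguments rest on the same facts (terminals are never consumed by $\GS_\gen$, each non-terminal gets a fresh destination, $\gendest$ absorbs the destinations orphaned by $\stepbeta$). What the two approaches buy is different: your version is the more conventional paper proof and isolates the grammar-to-tree correspondence as a reusable lemma, while the paper's trace-centric decomposition is the one that transfers directly into Meta-CLF in Sect.~3, where traces are first-class objects that can be pattern-matched and permuted but an auxiliary tree datatype would have to be encoded separately and related back to $\GS_\gen$.
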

\begin{proof}[Proof sketch]
  (See~\cite{Simmons12} for details.) Preservation proceeds by case analysis on
  the rewriting step and inversion on the generated trace. Let us consider the
  case $\stepeval$. We have that $\GD$ must be of the form $\GD_0,{\eval~e~d}$,
  and $\GD'$ must be of the form $\GD_0,{\ret~e~d}$, where $\val~e$. Then, the
  generation trace for $\GD$ must contain a step using $\geneval$ to construct
  $\eval~e~d$. The generation trace for $\GD'$ is constructed by replacing this
  step with a $\genret$ step.

  Progress proceed by induction on the length of the generating trace and case
  analysis on the first step.  The interesting case is when this step is
  $\genfapp$.  The rest of the trace can be split in two parts, one generating
  trace for each child of the $\fapp$ generated in the first step.  The proof
  follows by induction on these subtraces.
\end{proof}


\section{Meta-CLF}
\label{sec:meta-clf}

Meta-theorems such as preservation and progress cannot be expressed in CLF,
since it lacks primitives for manipulating traces as first-class objects.  For
example, we cannot talk about the type of generated traces of the form
$$ \{\GD \} \rewto^*_{\GS_\gen} \{ \GD' \} $$
which is essential to express preservation.

Furthermore, CLF lacks abstractions over context, which prevents us from
defining a trace type that is parametric over its interface.
For example, in CLF we can define a relation on traces
$$ {\sf rel}: (A \lolli \{ \dec{\lin x}{B} \}) \to
(A \lolli \{ \dec{\lin x}{B} \}) \to\type $$ that relates two traces that
transform an $A$ into a $B$.  However, we cannot define the type of traces as
a transformation between two generic contexts $\GD_1$ and $\GD_2$.  For this,
we need to quantify over contexts.  With a dependent product that takes
contexts as arguments, we can define of all traces that generate valid states
starting from a seed:
$$
\Pi t:\tp. \Pi \Gp:\ctx.~ (\Pi \dec{{!}d}{\dest}. \gen~d~t\lolli \{ \Gp \})
$$

We use these ideas for designing a logical framework that permits
meta-reasoning on CLF specifications.  The resulting framework, which we call
Meta-CLF, is an extension of LF with trace types and quantification over
contexts.  Trace types have the form
$$
\trtype{\GD}{\GS}{\GD'}
$$
where $\GD$ and $\GD'$ are CLF contexts and $\GS$ is a CLF signature that
contains (monadic) rewriting rules (e.g., $\GS_\gen$ and $\GS_\step$).  A term
of this type is a trace of the form:
$$
\Gd_1; \ldots; \Gd_n
$$
where each step $\Gd_i$ is either $\epat{\GD_i}{c_i\Sapp S_i}$ with $c_i$
declared in $\GS$, or $x_i\Sapp S_i$ where $x_i$ is a (Meta-CLF) variable
that, applied to $S_i$, returns a trace.  The interface of the whole trace is
given by $\GD$ and $\GD'$.

Meta-CLF includes two different trace types: $\trtype{\GD}{\GS^*}{\GD'}$ and
$\trtype{\GD}{\GS^1}{\GD'}$.  The former defines maximal traces,
while the latter defines traces of exactly one step.  We write
$\trtype{\GD}{\GS}{\GD'}$ to refer to either of these types.

\subsection{Syntax and Typing Rules}
Meta-CLF is an extension of LF with trace and context types, parameterized
over a CLF signature, denoted $\GS_0$.  The kinds, types, and contexts of
Meta-CLF are given by the following grammar:
\begin{align*}
  K &\deff \type \mid \GP\dec{x}{A}.K \mid \GP\dec{\Gp}{\ctx}.K
  \mid \clfpi\dec{x}{T}.{K} \mid \nab x. K & \text{(Kinds)} \\
  A &\deff a\Sapp S \mid \GP\dec{x}{A}.A \mid \GP\dec{\Gp}{\ctx}.A
  \mid \clfpi\dec{x}{T}.{A} \mid \nab x.A
  & \multirow{0}{*}{\text{(Types)}} \\
  & ~~~~\mid \trtype{\GD}{\GS^*}{\GD} \mid \trtype{\GD}{\GS^1}{\GD} \\
  \GD &\deff \Semp \mid \Gp, \GD \mid \dec{\lin x}{A}, \GD \mid \dec{{!}x}{A},
  \GD &\text{(Contexts)}
\end{align*}
Kinds include, besides the constructions derived from LF, products over
contexts ($\Pi\dec{\Gp}{\ctx}.K$), products over CLF types from the signature
$\GS_0$ ($\clfpi\decx{T}{K}$), and products over names ($\nab x.K$).  In
$\clfpi\decx{T}{K}$, the type $T$ must be well-typed in the signature $\GS_0$,
using the CLF typing rules.  Quantification over names is necessary for
composing traces, to ensure that names declared in the interface match.
The notation is taken from~\cite{MillerT03}.

Types in Meta-CLF include the analogous kinds contructors applied to the type
level, with the addition of the trace types $\trtype{\GD}{\GS^*}{\GD'}$ and
$\trtype{\GD}{\GS^1}{\GD'}$.
Contexts are sequences consisting of linear declarations, persistent
declarations, and context variables, whose types are CLF types checked in the
signature $\GS_0$.
Names of declared variables must be introduced using $\nab$.

Type preservation for the language described in
Sect.~\ref{sec:substr-oper-semant} can then be stated in Meta-CLF as follows:
\begin{multline*}
  \clfpi\dec{t}{\tp}.~\nab d.~\nab g.~\Pi\dec{\Gp_1}{\ctx}.~\Pi\dec{\Gp_2}{\ctx}.~
\\\trtype{{!}d:\dest,\lin g:\gen~d~t}{\GS_\gen^*}{\Gp_1} \to
  \trtype{\Gp_1}{\GS_\step^1}{\Gp_2} \to
  \trtype{{!}d:\dest,\lin g:\gen~d~t}{\GS_\gen^*}{\Gp_2} \to \type
\end{multline*}
This type family can be read functionally as follows: given a trace that
generates $\Gp_1$ (using $\GS_\gen$), and a step from $\Gp_1$ to $\Gp_2$, we can
obtain a trace that generates $\Gp_2$ (using $\GS_\gen$).

Terms, spines, and traces in Meta-CLF are defined by the following grammar:
\begin{align*}
  N &\deff \Gl x.N \mid H\Sapp S \mid \Gl\Gp.N \mid \clflam x.N \mid \{\eps\} & \text{(Terms)}\\
  H &\deff x \mid c & \text{(Heads)} \\
  S &\deff \Semp \mid N; S \mid \GD; S \mid \#; S \mid \clf{M} &\text{(Spines)}
  \\
  \eps &\deff \diamond \mid \eps_1;\eps_2 \mid \epat{\GD}{c\Sapp  S} \mid x\Sapp S & \text{(Traces)}
\end{align*}
Terms include the introduction forms of Meta-CLF type ($\Gl x.N$), contexts
($\Gl\Gp.N$), CLF types ($\clflam x.N$), as well as atomic terms (a variable
applied to a spine) and traces ($\{\eps\}$).
Spines are sequence formed by terms ($N$), contexts ($\GD$), CLF terms
($\clf{M}$), and fresh names ($\#$).

A trace is either empty ($\trempty$), a composition of two traces
($\eps_1;\eps_2$), a single step ($\epat{\GD}{c\Sapp S}$) where $c$ is defined
in the CLF signature $\GS_0$, or a (Meta-CLF) trace variable applied to a spine
($x\Sapp S$).

\paragraph{Typing rules.}
Typing judgments are parameterized by the CLF signature $\GS_0$.
The typing rules are defined by the following judgments:
\begin{align*}
  & \jkind[\GS_0]{\GS;\GG;\GX}{K} & \text{(Kinds)} \\
  & \jtype[\GS_0]{\GS;\GG;\GX}{A}{K} & \text{(Types)} \\
  & \jctx[\GS_0]{\GS;\GG;\GX}{\GD} & \text{(Contexts)} \\
  & \jterm[\GS_0]{\GS;\GG;\GX}{M}{A} & \text{(Terms)} \\
  & \jsp[\GS_0]{\GS;\GG;\GX}{S}{A'}{A} & \text{(Spines)} \\
  & \jterm[\GS_0]{\GS;\GG;\GX}{\eps}{\trtype{\GD}{\GS}{\GD'}} & \text{(Traces)}
\end{align*}
The signature $\GS$ is a Meta-CLF signature, while $\GS_0$ is a CLF signature.
We usually omit them for clarity.
The context $\GG$ contains the declarations of Meta-CLF variables, contexts, and
CLF variables.
The context $\GX$ contains name declarations.

These contexts are defined by the following grammar:
\begin{align*}
\GG &\deff \Semp \mid \GG,\decx{A} \mid \GG,\dec{\Gp}{\ctx} \mid
\GG,\clfdec{x}{T} \\
\GX &\deff \Semp \mid \GX,x
\end{align*}

The typing rules are defined in Fig.~\ref{fig:typing-terms-metaclf}. We only
show the rules related to the new constructions. We use the typing judgments
from CLF to check products over CLF types. We also need a filtering operation on
contexts, denoted $|\_|$ that keeps declarations of CLF types. It is defined by
$|\Semp|=\Semp$, $|\GG,\clfdec{x}{T}|=|\GG|,\dec{{!}x}{T}$, and
$|\GG,\Gg|=|\GG|$ for declarations $\Gg$ of Meta-CLF types and contexts.

\begin{figure}
\noindent  Kinds: \fbox{$\jkind[\GS_0]{\GS;\GG;\GX}{K}$}
$$
\infer { \jkind{\GG,\dec{\Gp}{\ctx};\GX}{K} } {
  \jkind{\GG;\GX}{\GP\dec{\Gp}{\ctx}.K} }
\qquad
\infer { \jkind{\GG;\GX,x}{K} } { \jkind{\GG;\GX}{\nab x.K} } \qquad \infer {
  \jtypeclf{\GS_0;|\GG|}{T}{\type} \\ \jkind{\GG,\dec{x}{T};\GX}{K} } {
  \jkind{\GG;\GX}{\clfpi\dec{x}{T}.K} }
$$
Types: \fbox{$\jtype[\GS_0]{\GS;\GG;\GX}{A}{K}$}
$$
\infer { \jtype{\GG,\dec{\Gp}{\ctx};\GX}{A}{\type} } {
  \jtype{\GG;\GX}{\GP\dec{\Gp}{\ctx}.A}{\type} } \qquad \infer {
  \jtype{\GG;\GX,x}{A}{K} } { \jtype{\GG;\GX}{\nab x.A}{K} }
$$
$$
\infer { \jctx{\GG;\GX}{\GD_1} \\ \jctx{\GG;\GX}{\GD_2} \\ \jsigclf{|\GG|}{\GS}
} { \jtype{\GG;\GX}{\trtype{\GD_1}{\GS}{\GD_2}}{\type} } \qquad \infer {
  \jtypeclf{|\GG|}{T}{\type} \\ \jtype{\GG,\decx{T};\GX}{A}{\type} } {
  \jtype{\GG;\GX}{\clfpi\decx{T}.A}{\type} }
$$
Contexts: \fbox{$\jctx[\GS_0]{\GS;\GG;\GX}{\GD}$}
$$
\infer { } { \jctx{\GG;\GX}{{\Semp}} }
\qquad
\infer { \jctx{\GG;\GX}{\GD} \\\\ x \in
  \GX\setminus\dom{\GD} \\ \jtypeclf{\GS_0;|\GG|,{!}\GD}{A}{\type}
   }
{ \jctx{\GG;\GX}{\GD,\dec{\cmod{x}}{A}} }
\qquad
\infer { \jctx{\GG;\GX}{\GD} \\ \dec{\Gp}{\ctx} \in \GG } {
  \jctx{\GG;\GX}{\GD,\Gp} }
$$
Terms: \fbox{$\jterm[\GS_0]{\GS;\GG;\GX}{M}{A}$}
$$
\infer { \jterm{\GG,x:A;\GX}{M}{B} } { \jterm{\GG;\GX}{\Gl x.M}{\GP x:A.B} }
\qquad \infer { \dec{c}{B}\in \GS \\ \jsp{\GG;\GX}{S}{B}{A} } {
  \jterm{\GG;\GX}{c\Sapp S}{A} } \qquad \infer { \dec{x}{B}\in \GG \\
  \jsp{\GG;\GX}{S}{B}{A} } { \jterm{\GG;\GX}{x\Sapp S}{A} }
$$
$$
\infer { \jterm{\GG,\dec{\Gp}{\ctx};\GX}{N}{A} } {
  \jterm{\GG}{\Gl\Gp.N}{\GP\dec{\Gp}{\ctx}.A} } \qquad \infer {
  \jterm{\GG,\clfdecx{T};\GX}{N}{A} } {
  \jterm{\GG;\GX}{\clflam x.N}{\clfpi\dec{x}{T}.A} } \qquad \infer
{ 
  \jterm{\GG;\GX}{\eps}{\trtype{\GD_1}{\GS}{\GD_2}} } {
  \jterm{\GG;\GX}{\{\eps\}}{\trtype{\GD_1}{\GS}{\GD_2}} }
$$
Spines: \fbox{$\jsp[\GS_0]{\GS;\GG;\GX}{S}{A'}{A}$}
$$
\infer { } { \jsp{\GG;\GX}{\Semp}{A}{A} }
\qquad \infer { \jctx{\GG;\GX}{\GD} \\
  \jsp{\GG;\GX}{S}{A[\GD/\Gp]}{B} } {
  \jsp{\GG;\GX}{\GD;S}{\GP\dec{\Gp}{\ctx}.A}{B} }
\qquad \infer { \text{$\alpha$
    fresh} \\ \jsp{\GG;\GX}{S}{A[\alpha/x]}{B} } { \jsp{\GG;\GX}{\#;S}{\nab
    x.A}{B} }
$$
$$
\infer { \jterm{\GG;\GX}{N}{A_1} \\ \jsp{\GG;\GX}{S}{A_2[N/x]}{B} } {
  \jsp{\GG;\GX}{N;S}{\GP\dec{x}{A_1}.A_2}{B} }
\qquad
\infer { \jtypeclf{\GS_0;|\GG|}{M}{T} \\ \jsp{\GG;\GX}{S}{A[\clf{M}c/x]}{B} } {
  \jsp{\GG;\GX}{\clf{M};S}{\GP\dec{x}{T}.A}{B} }
$$
Traces: \fbox{$\jterm[\GS_0]{\GS;\GG;\GX}{\eps}{\trtype{\GD}{\GS^*}{\GD'}}$}
$$
\infer { } { \jterm{\GG;\GX}{\trempty}{\trtype{\GD}{\GS^*}{\GD}} }
\qquad
\infer
{\jterm{\GG;\GX}{\eps_1}{\trtype{\GD_1}{\GS^*}{\GD}} \\
  \jterm{\GG;\GX}{\eps_2}{\trtype{\GD}{\GS^*}{\GD_2}} }
{\jterm{\GG;\GX}{\eps_1;\eps_2}{\trtype{\GD_1}{\GS^*}{\GD_2}} }
$$
$$
\infer { \dec{c}{A}\in \GS \\ \jspclf{\GS_0;{!}|\GG|,\GD_1}{S}{A}{\GD_1} } {
  \jterm{\GG;\GX}{\epat{\GD_2}{c\Sapp
      S}}{\trtype{\GD_0\cjoin\GD_1}{\GS}{\GD_0,\GD_2}} }
\qquad
\infer
{\decx{A}\in \GS \\
  \jsp{\GG;\GX}{S}{A}{\trtype{\GD_1}{\GS^*}{\GD_2}} } {\jterm{\GG;\GX}{x\Sapp
    S}{\trtype{\GD_1}{\GS^*}{\GD_2}} }
$$
\caption{Typing rules for terms and traces in Meta-CLF}
\label{fig:typing-terms-metaclf}
\end{figure}

\subsection{Safety for SSOS Specifications}
\label{sec:safety-ssos-spec}
We illustrate the use of Meta-CLF by stating and proving safety for the
language introduced in Sect.~\ref{sec:substr-oper-semant}.  We use Meta-CLF
over the signature $\GS_{\stlc}$ defining the language \stlc\ and its static
semantics.

Let us recall the type family expressing preservation of types in Meta-CLF:
\begin{multline*}
  {\sf tpres}:\clfpi\dec{t}{\tp}.~\nab d.~\nab g.~\Pi\dec{\Gp_1}{\ctx}.~\Pi\dec{\Gp_2}{\ctx}.~
\\\trtype{{!}d:\dest,\lin g:\gen~d~t}{\GS_\gen^*}{\Gp_1} \to
  \trtype{\Gp_1}{\GS_\step^1}{\Gp_2} \to
  \trtype{{!}d:\dest,\lin g:\gen~d~t}{\GS_\gen^*}{\Gp_2} \to \type
\end{multline*}

The proof proceeds by case analysis on the type
$\trtype{\Gp_1}{\GS_\step^1}{\Gp_2}$ (we follow essentially the same reasoning
as described in the proof sketch of Theorem~\ref{thm:safety-stlc}).
We have three cases, one case for each rule in $\GS_\step$.
For each case, we apply inversion on the trace of type
$\trtype{{!}d:\dest,\lin g:\gen~d~t}{\GS_\gen^*}{\Gp_1}$.

Consider the case $\stepeval$; in this case $\Gp_1$ must be of the form
$\GD,\decx{\eval~e~d_0}$ and $\Gp_2$ must be of the form
$\GD,\dec{y}{\ret~e~d_0}$, for some value expression $e$.  By inversion, in the
trace generating $\Gp_1$, there must be a step that uses $\geneval$ to
generate the declaration of $x$.  That is, the generating trace has the form
$$
X_1; \epat{\lin x}{\geneval~e~d_0~g_0~H}; X_2
$$
where $\dec{g_0}{\gen~d_0~t_0}$ for some type $t_0$ is generated by $X_1$ and
$H$ is a proof that $e$ has type $t_0$ (i.e.\ $H:\tof~e~t_0$).  Note that $X_2$ cannot consume $x$,
since $\eval$ is a terminal in the grammar defined by $\GS_\gen$.  Then, the
step generating $x$ can be permuted towards the end of the trace, so that
$X_2$ can be taken to be the empty trace $\trempty$.  To construct the trace
that generates $\Gp_2$, we only need to replace this last step by a $\genret$
step.

In Meta-CLF, we can write this case of the proof as follows, where we omit the
dependent arguments for clarity (like in LF and CLF, we expect that implicit
arguments can be reconstructed):
\begin{align*}
\sf{tpres/ret} :
{\sf tpres}~ &(X_1; \epat{\lin x}{\geneval~e~d_0~g_0~H}) \\
& (\epat{\lin y}{\stepeval~e~d_0~x~H_v}) \\
& (X_1; \epat{\lin y}{\genret~e~d_0~g_0~H~H_v})
\end{align*}
where
$X_1:\trtype{{!}d:\dest,\lin g:\gen~d~t}{\GS_\gen^*}{\Gp_1',{!}d_0:\dest,\lin g_0:\gen~d_0~t_0}$
and $\dec{H_v}{\val~e}$.
The full type making explicit all arguments is the following:
\begin{align*}
\sf{tpres/ret} ~:~& \nab x.\nab y.\nab d.\nab g.\nab d_0.\nab g_0.\Pi\Gp_1':\ctx.
                   \clfpi e:\texp. \clfpi t:\tp. \clfpi t_0:\tp. \\
            & \Pi X:\trtype{{!}d:\dest,\lin g:\gen~d~t}{\GS_\gen^*}
            {\Gp_1',{!}d_0:\dest,\lin g_0:\gen~d_0~t_0}. \\
            & \Pi \dec{H}{{\sf of}~e~t_0}.\Pi\dec{H_v}{\val~e}. \\
            & {\sf tpres}~t~d~g~(\Gp_1',{!}d_0:\dest,\lin x:\eval~e~d_0)~
                  (\Gp_1',{!}d_0:\dest,\lin y:\ret~e~d_0) \\
                  & ~~~~~~~~~(X_1; \epat{\lin x:\eval~e~d_0}{\geneval~e~d_0~g_0~H}) \\
                  & ~~~~~~~~~ (\epat{\lin y:\ret~e~d_0}{\stepeval~e~d_0~x~H_v}) \\
                  & ~~~~~~~~~ (X_1;\epat{\lin y:\ret~e~d_0}{\genret~e~d_0~g_0~H~H_v})
\end{align*}
Although we do not treat implicit argument inference in this paper, we expect
that we can infer implicit arguments by extending the LF type reconstruction
algorithm.  As we see from the case above, implicit arguments greatly increase
the usability of the system, allowing to write more concise and clear proofs.

The other two cases of the proof, corresponding to $\stepapp$ and $\stepbeta$,
are given below:
\begin{align*}
  {\sf tpres/app} : {\sf tpres}~
  & (X_1;\epat{\lin x}{\geneval~(\app~e_1~e_2)~d_0~g_0~H}) \\
  & (\epat{{!}d_1,{!}d_2,\lin x_1,\lin x_2,\lin f}{\stepapp~e_1~e_2~d_0~x}) \\
  & (X_1; \epat{{!}d_1,{!}d_2,\lin f,\lin g_1,\lin g_2}{\genfapp~d_0~g_0}; \\
  & ~~~~~~~
  \epat{\lin x_1}{\geneval~e_1~d_1~g_1~H_1};~\epat{\lin x_2}{\geneval~e_2~d_2~g_2~H_2}) \\[2ex]
  {\sf tpres/beta} : {\sf tpres}~
  &
  (X_1;\epat{{!}d_1,{!}d_2,\lin f,\lin g_1,\lin g_2}{\genfapp~d_0~g_0}; \\
  & ~~~~~~~~\epat{\lin x_1}{\genret~(\lam~e_1)~d_1~g_1~H_1~H_{v_1}};~
  \epat{\lin x_2}{\genret~e_2~d_2~g_2~H_2~H_{v_2}})\\
  & (\epat{\lin y}{\stepbeta~e_1~e_2~d_1~d_2~d~x_1~x_2~f}) \\
  & (X_1;\epat{\lin y}{\geneval~(e_1~e_2)~d_0~g_0~H};\epat{{!}d_1}{\gendest};\epat{{!}d_2}{\gendest})
\end{align*}
The latter case is the most interesting.
By inversion, the generated state $\Gp_1$ must be of the form
$$
\begin{array}{lll}
   X_1; & \epat{{!}d_1,{!}d_2,\lin g_1,\lin g_2,\lin f}{\genfapp~d_0~g_0};           & X_2;
\\      & \epat{\lin x_1}{\genret~(\lam~e_1)~d_1'~g_1'~H_1~H_{v_1}}; & X_3;
\\      & \epat{\lin x_2}{\genret~e_2~d_2'~g_2'~H_2~H_{v_2}};        & X_4
\end{array}
$$
for some traces $X_1$, $X_2$, $X_3$, $X_4$.
It must be that $d_1=d_1'$ and $d_2=d_2'$, and also $g_1=g_1'$ and $g_2=g_2'$.
Note that $X_2$, $X_3$, and $X_4$ cannot use $g_1$ and $g_2$, so the trace can be
reordered as
$$
\begin{array}{ll}
   X_1;X_2;X_3;X_4; & \epat{{!}d_1,{!}d_2,\lin g_1,\lin g_2,\lin f}{\genfapp~d_0~g_0};
\\                  & \epat{\lin x_1}{\genret~(\lam~e_1)~d_1~g_1~H_1~H_{v_1}};
\\                  & \epat{\lin x_2}{\genret~e_2~d_2~g_2~H_2~H_{v_2}})
\end{array}
$$
and $X_1$, $X_2$, $X_3$, and $X_4$ can be collapsed into one trace variable.
In the generated trace after the rewriting step, we simply replace the
generation of $\fapp$ and the two $\ret$s with a single $\eval$ fact.
We also need two $\gendest$ steps for the destinations $d_1$ and $d_2$ which are
not used anymore.

For proving progress, we first need to define a sum type that encodes the
result: either we are at a final state or we can take a step.
\begin{align*}
  \result &: \ctx \to \type \\
  \resfinal &: \trtype{\dec{!d}{\dest},\dec{\lin
      x}{\gen~d~t}}{\GS_\gen^*}{\Gp,\dec{{!}d}{\dest},\dec{\lin x}{\ret~e~d}} \to
  \result~(\Gp,\dec{\lin x}{\ret~e~d}) \\
  \resstep &: \trtype{\Gp_1}{\GS_\step^1}{\Gp_2} \to \result~\Gp_1
\end{align*}
Note in $\resfinal$ that the generated trace has only one $\ret$ fact containing
the final value at destination $d$, while $\Gp$ contains only destinations
(obtained from $\gendest$ steps) that are not used anymore.
These are destinations that were generated during the evaluation of an
expression by the $\stepapp$ rule.

The progress theorem relates a well-typed state with a result:
\begin{align*}
  {\sf progress}: \nab d.\nab g.\clfpi t:\tp.\Pi\Gp:\ctx.
  \trtype{{!}d:\dest,\lin g:\gen~d~t}{\GS_\gen^*}{\Gp}\to
  \result~\Gp\to\type
\end{align*}
We proceed by case analysis on the first step of the trace.
If it is $\genret$, then we are at a final state:
\begin{align*}
  {\sf p/ret} : {\sf progress}~(\epat{\lin x}{\genret~e~d~g~H~H_v}; X)~(\resfinal~(\epat{\lin x}{\genret~e~d~g~H~H_v}; X))
\end{align*}

If the trace starts with a $\geneval$ step, then we can make a step
depending on which expression is generated.
If the expression generated is an abstraction, we can make a step using
$\stepeval$ since abstractions are values.
If the expression generated is an application, we can make a step using
$\stepapp$.
\begin{align*}
{\textsf{p/ev-lam}} &: {\sf progress}~(\epat{\lin x}{\geneval~(\lam~e)~d~H}; X) \\
& ~~~~~~~~~~~~~~~~~(\resstep~(\epat{\lin x}{\stepeval~(\lam~e)~d~x~(\vallam~e)})) \\
{\textsf{p/ev-app}} &: {\sf progress}~(\epat{\lin x}{\geneval~(\app~e_1~e_2)~d~H}; X) \\
& ~~~~~~~~~~~~~~~~~(\resstep~(\epat{{!}d_1,{!}d_2,\lin x_1,\lin x_2,\lin f}{\stepapp~e_1~e_2~x~H}))
\end{align*}
Finally, we consider the case where the first step is $\genfapp$.
The generated trace has the form:
$$ \epat{{!}d_1,{!}d_2,\lin f,\lin g_1,\lin g_2}{\genfapp~d~g; X_1~d_1~g_1 ; X_2~d_2~g_2} $$
where the traces $X_1$ and $X_2$ generate trees rooted at $d_1$ and $d_2$
respectively.
Note that, because $\GS_\gen$ is a generative grammar, these traces are
independent.

We have three subcases: either $X_1$ and $X_2$ generate final states (in which
case we can make a step using $\stepbeta$, or we can make a step in either $X_1$
or $X_2$:
\begin{align*}
  {\textsf{p/fapp1}} &: {\sf progress}~
  (\epat{{!}d_1,{!}d_2,\lin f,\lin g_1,\lin g_2}{\genfapp~d~g};\\
  & ~~~~~~~~~~~~~~~~~~~\epat{\lin x_1}{\genret~(\lam~e_1)~d_1~g_1~H_1~H_{v_1}}; \\
  & ~~~~~~~~~~~~~~~~~~~\epat{\lin x_2}{\genret~e_2~d_2~g_2~H_2~H_{v_2}}) \\
  & ~~~~~~~~~~~~~~~~~(\resstep~(\epat{\lin y}{\stepbeta~e_1~e_2~d_1~d_2~d~x_1~x_2~f})) \\[2ex]
  {\textsf{p/fapp2}} &: {\sf progress}~
  (\epat{{!}d_1,{!}d_2,\lin f,\lin g_1,\lin g_2}{\genfapp~d~g;X_1~d_1~g_1;X_2~d_2~g_2}) \\
  &~~~~~~~~~~~~~~~~~~~(\resstep~z) \\
  & ~~\gets {\sf progress}~(X_1~d_1~g_1)~(\resstep~z) \\[2ex]
  {\textsf{p/fapp3}} &: {\sf progress}~
  (\epat{{!}d_1,{!}d_2,\lin f,\lin g_1,\lin g_2}{\genfapp~d~g;X_1~d_1~g_1;X_2~d_2~g_2}) \\
  &~~~~~~~~~~~~~~~~~~~(\resstep~z) \\
  & ~~\gets {\sf progress}~(X_2~d_2~g_1)~(\resstep~z)
\end{align*}

\paragraph{Totality.}
We showed how to encode, in Meta-CLF, proofs of safety for a small programming
language with a parallel semantics.
A natural question is: are these valid proofs? This amounts to check totality,
which is the conjunction of two properties: coverage (i.e., all cases are
considered) and termination.
While we do not give a formal treatment of totality for Meta-CLF (which we leave
for future work), we can show, informally, that both proofs given above are
total.

It is easy to see that both proofs are terminating: in the case of preservation,
the proof is not recursive, while for progress, recursive calls are performed on
smaller traces.

Checking coverage is trickier, since it encompasses checking coverage for
traces, which is a difficult problem because trace equality allows permuting
steps.

In our proof of preservation, coverage checking manifests itself in the use of
inversion: for each case of the step relation from context $\Gp_1$ to $\Gp_2$
we apply inversion to obtain a pattern that covers the generation of $\Gp_1$.
As we explained above, we only need one pattern for each case.

Note that coverage in the proof of preservation depends on the fact that
$\GS_\gen$ is a generative grammar.
In particular, that terminal symbols are not removed from the context once they are
produced, and that fresh destinations are created for each non-terminal (this property
is used in $\stepbeta$ case).

In the proof of progress, coverage checking is directly performed over the
generated trace. The interesting case is $\genfapp$, since this case involves
splitting the trace between the trees generated for each of the children of
$\fapp$.
Again, this is possible because each non-terminal is associated with a fresh
destination, so generated traces from starting from different non-terminals are
independent.

While coverage checking for traces in general is a difficult problem, by
restricting to traces generated using grammars, we expect to obtain a relatively
simple algorithm to solve this problem.

It is important to note that these proofs would be similar if we consider a
sequential semantics.  Viewed in a different direction, having a parallel
semantics does not change the proof with respect to the sequential case.  The
reason is the use of SSOS and trace equality, and the same holds true when
considering other programming constructions~\cite{Simmons12}.  The burden of
the proof is shifted to the coverage checker.  However, the use of generative
grammars makes it possible to automate coverage checking.






\section{Related Work}\label{sec:related-work}

The original reports that introduced CLF~\cite{CervesatoPWW03,CervesatoWPW03}
include many applications, including SSOS of several programming languages
features.
However, no meta-reasoning is developed.

Attempts to perform meta-reasoning within CLF have proved to be
unsatisfactory.  Watkins et al.~\cite{WatkinsCPW08} define an encoding of the
$\pi$-calculus and correspondence assertions for it in CLF\@.  They define an
abstraction relation that relates a concurrent computation with a sequence of
events.  However, due to lack of trace types, it is not possible to state the
abstraction relation. This means also that coverage is difficult to establish.
Schack-Nielsen~\cite{Schack-Nielsen11} discusses the limitations of CLF to prove
the equivalence between small-step and big-step semantics of MiniML.

Simmons~\cite{Simmons12} introduced the notion of generative grammar that
generalizes both context-free grammars and regular worlds used in LF\@.  He
describes in detail the use generative grammars and SSOS\@.  However, the
proofs of safety are done only ``on paper'' since his framework is not
expressive enough for this task.  This work is an attempt to define a logical
framework to carry out the proofs described in~\cite{Simmons12}.

Finally, let us mention our previous work on matching traces in
CLF~\cite{CervesatoPSSS12b} which provides the basis for defining a moded
operational semantics for Meta-CLF\@.  We expect to strengthen the results
given in~\cite{CervesatoPSSS12b} by restricting the matching problem to traces
produced using generative grammars.

\section{Conclusions}\label{sec:conclusions}

We have developed a logical framework for meta-reasoning about specifications
written in CLF.  We show a typical use of this framework by proving type
preservation and progress for a small programming language with a parallel
semantics.  Trace equality simplifies the proof as we do not have to worry
about proving properties about step interleavings during the parallel
execution.  However, the downside of this approach is that coverage checking
is more complicated than in the sequential case.

For future work, an immediate objective is to complete the meta-theoretical
study of Meta-CLF\@ itself. This involves proving the existence of canonical
forms, type reconstruction of implicit arguments, and totality checking
(coverage and termination). Then, of course, implementation is another obvious
objective.

This framework is well suited for the kind of proofs of safety we developed in
this paper and we expect it to perform well for other concurrent and parallel
programming constructions (e.g., futures, communication).
It will be interesting to see in what other domain we can use it.
For example, semantics of relaxed memory models, or correctness of program
transformations in the presence of threads.

\bibliographystyle{eptcs}
\bibliography{references}

\end{document}